\documentclass{article}

 \usepackage[preprint]{neurips_2026}

\usepackage{amsmath,amsthm, amssymb, mathtools}
\usepackage{enumitem}
\usepackage{booktabs}
\usepackage{hyperref, comment}
\hypersetup{colorlinks=true,linkcolor=blue,citecolor=blue,urlcolor=blue}

\theoremstyle{theorem}
\newtheorem{theorem}{Theorem}
\theoremstyle{lemma}
\newtheorem{lemma}{Lemma}
\theoremstyle{proposition}
\newtheorem{proposition}{Proposition}

\theoremstyle{definition}
\newtheorem{definition}{Definition}

\newcommand{\R}{\mathbb{R}}
\newcommand{\X}{\mathcal{X}}
\newcommand{\I}{\mathbb{I}}
\newcommand{\E}{\mathbb{E}}
\newcommand{\1}{\mathbf{1}}

\begin{document}

\title{
What Limits Does Quantization Place on Dense Top-$k$ Retrieval? A Theoretical Study
}

\author{%
  Koki Okajima \\
NTT, Inc. \\
  \texttt{koki.okajima@ntt.com} \\
  \And 
    Tsukasa Yoshida \\
NTT, Inc. \\
  \texttt{tsukasa.yoshida@ntt.com} \\
}

\maketitle

\begin{abstract}
We establish conditions for embedding a corpus of $N$ documents as $d$-dimensional vectors such that every $k$-subset $S \subseteq [N]$ is realizable as a result of top-$k$ retrieval by some query vector. Recent work shows that $d = O(k)$ suffices for such embeddings to exist in $\mathbb{R}^d$, independently of $N$. We theoretically prove that this corpus-independent bound is specific to infinite precision. With $B$ bits per coordinate, perfect top-$k$ retrieval requires $Bd = \Omega(k \ln N)$; thus, at any fixed precision, the dimension must grow at least logarithmically with $N$. Specializing to a $\ell_2$-normalized $B$-bit uniform scalar quantization model, we also identify a threshold on the precision $B^{*} = O(\ln \ln N)$ below which no dimension suffices, together with two further regimes that bound the feasible $(B, d)$ pairs. Our result implies that in practical vector databases and dense retrieval systems where quantization is standard, the embedding dimension and possibly the precision must grow with the corpus size.
s 
\end{abstract}

\section{Introduction}

Dense vector embeddings form the basis of modern information retrieval,
recommendation, and retrieval-augmented generation \citep{su2023instructor, shao2025reasonir}, where a query and a corpus are
mapped into a common space $\mathbb{R}^d$, and its relevance score is decided by inner
product between these embeddings. A natural, yet fundamental question is how the embedding
dimension $d$ must scale with the corpus size $N$ if the model is to realize
every possible top-$k$ retrieval set, that is, if for every $k$-subset
$S \subseteq [N]$ there is some query vector whose top-$k$ result of the inner product with the corpus vectors is exactly $S$.
The question has gained attention as embedding models are extended
from semantic similarity to instruction-following and reasoning, regimes in
which the number of distinct top-$k$ retrieval tasks that the embeddings must
support grows combinatorially with $N$.

Recently, it has been proven that this condition is quite generous. 
\citet{weller2026limits} and \citet{wang2026r2k} have proven that for embeddings on a uniform sphere in $\mathbb{R}^d$, one only needs $d = 2k +1 = O(k)$ to embed any top-$k$ relevance relationship between the corpus and queries, which indicates that such vector representations can exist even in low dimensional spaces. 
In addition, it has been demonstrated that such embeddings are difficult to learn under standard learning procedures, highlighting the discrepancy between the existence of such embeddings and its learnability. 

However, the analysis assumes real-valued coordinates of unbounded precision. 
Deployed retrieval systems do not. In fact, 
with the advent of large-scale databases,
vector coordinates are stored in a small number of bits, with formats such as int8 \citep{jacob2018quantization,dettmers2022llmint8}, int4 \citep{Frantar2023ICLR}, fp8 \citep{micikevicius2022fp8}, and fp4 \citep{liu2023FP4} now standard, along with product quantization \citep{jegou2011pq} and related schemes.
Whether $O(k)$ bound, which is independent on the size of the corpus, 
survives this discretization is not addressed by prior work. The question is
non-trivial, since an existence argument under continuous embeddings relies on the
 geometry of $\mathbb{R}^d$, whereas quantized embeddings lie on a finite set. 

\paragraph{Contributions} In this work, we theoretically show that the corpus-independent bound indeed does not survive. Once the
embedding alphabet is finite, the existence question is governed by a
first-moment count over a discrete hypothesis space rather than by the
continuous geometry of $\mathbb{R}^d$. 
Based on this argument, our contributions are as follows: 

\begin{enumerate}[leftmargin=2em,itemsep=2pt]
\item We prove that realizing every
$k$-subset of an $N$-corpus over a $B$-bit-per-coordinate alphabet requires
$Bd = \Omega (k\ln N)$, so that the dimension must grow with $N$ at every
fixed precision and the precision must grow with $N$ at every fixed
dimension (Theorem \ref{thm:generic-bound}).  
\item Specializing to an $\ell_2$-normalized embedding model where each element is uniformly quantized, we identify a tighter bound on the necessary dimension compared to Theorem \ref{thm:generic-bound}, 
and furthermore reveal a threshold $B^* = O(\ln \ln N)$ on the number of bits $B$ below which no dimension suffices. Moreover, we further prove that there is also a ceiling for $d$ in which top-$k$ retrieval becomes impossible above it (Theorem \ref{thm:fp-bound}). 
\end{enumerate}

\section{Related Work}

\paragraph{Dimension and retrieval capacity.}
The empirical dependence of retrieval quality on embedding dimension at scale
was documented by \citet{reimers2021curse} and
\citet{yin2018dimensionality}. On the theoretical side, 
\citet{weller2026limits} establish that an embedding dimension of $d = 2k+1$ is sufficient to realize
every top-$k$ retrieval set under cosine similarity via the sign-rank
of the relevance matrix~\citep{alon1985geometrical, Forster2002Comms}, and further using a sphere
packing argument to address the case with when the scores between relevant and irrelevant documents are separated by a constant margin. \citet{wang2026r2k} reach the same 
$2k+1$ bound via shattering and VC-type arguments~\citep{mohri2018foundations}. Both
arguments take real-valued coordinates of unbounded precision as given;
neither addresses whether the same $2k+1$ dimension remains sufficient once
coordinates are stored at finite precision. 

\paragraph{The first moment method.}
The first moment method is a classical technique in probabilistic analysis~\citep{alon2016probabilistic,mezard2009information} to establish
nonexistence of some object. Here, one evaluates the upper bound on the count of said admissible objects, and observes that whenever the bound falls below one,
the count must vanish and hence the object cannot exist. The technique dates back to 
\citet{erdos1947graph}, and has since become a principal tool for locating sharp
thresholds in random discrete structures such as random
$k$-SAT~\citep{achlioptas2004threshold,friedgut1999sharp,ding2022satisfiability}, sparse random
graphs~\citep{achlioptas2005chromatic}, and random binary codes~\citep{barg2002random}. 
 We instantiate the same scheme in a new setting by counting
configurations of a finite-alphabet embedding that could shatter every $k$-subset.

\paragraph{Quantization of vector embeddings.}
Quantization is used in deployed retrieval to cut the memory and bandwidth
cost of storing large embedding indexes. 
In this work, we focus on element-wise quantization methods. \textit{Uniform scalar quantization} maps each coordinate independently to one of $2^B$ evenly-spaced levels in a fixed range, with the int4 and int8 formats \citep{jacob2018quantization, douze2024faiss, johnson2019faissgpu} being canonical choices.
Non-uniform schemes such as fp4 and fp8 instead allocate precision unevenly across the range \citep{micikevicius2022fp8}. 
Even more aggressive quantization methods which map elements to $\{0,1\}$ hash codes \citep{Salakhutdinov2009SemanticHashing, yamada2021binary} have also been shown effective.  
Other methods include vector quantization, which compresses
entire embeddings to a small codebook, with product quantization as a widely used variant \citep{jegou2011pq} to enable efficient approximate nearest neighbor search \citep{andoni2008nearoptimal, Malkov2020ANN}. 
The $B$-bit-per-coordinate alphabet used in Section~\ref{sec:fp} corresponds to
the FAISS scalar quantizer
family~\citep{douze2024faiss,johnson2019faissgpu}, while the bound
$Bd=\Omega(k\ln N)$ applies to any code.

\section{Problem Setup}\label{sec:setup}

Each query and each document of a corpus of size $N$ is embedded into a discrete
set $\X \subseteq \R^{d}$. Following the $k$-shattering formulation
of \citet{wang2026r2k}, we ask whether the corpus can be configured so that every
ground-truth set is exactly separable by inner-product score.

\begin{definition}[Existence indicator]
    For any ground-truth set $S\subseteq[N]$ with $|S|=k \leq N$, define $\I \in \{0,1\}$ as 
\[
  \I = \1\left[\,
    \exists\,\{u_i \in \X\}_{i=1}^{N},\;
    \exists\,\{v_S \in \X\}_S\;\text{s.t.}\;
    \forall S,\;
    \min_{i \in S} u_i\cdot v_S > \max_{i \notin S} u_i\cdot v_S
  \right].
\]
\end{definition}

$\I=1$ means a finite-precision configuration that serves every top-$k$ query
exists. $\I=0$ means no such configuration exists in $\X$, regardless of
training, optimization, or model capacity. Throughout,
$M \coloneqq \binom{N}{k}$ is the number of distinct ground-truth sets, and logarithms are natural.proofs of all propositions and theorems are deferred to Appendix \ref{app:proofs}.

\section{A Counting Bound for Finite Alphabets}\label{sec:counting}

A finite alphabet admits only finitely many configurations, while the number of
separation constraints grows combinatorially in $N$. A first-moment count turns
this into a quantitative bound.

\begin{proposition}[First-moment bound]\label{lem:first-moment}
\[
  \I \;\le\; \exp\bigl[\,-M\ln M + (N+M)\ln|\X|\,\bigr].
\]
\end{proposition}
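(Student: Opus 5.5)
The plan is to use that $\I\in\{0,1\}$. The right-hand side $\exp[\cdot]$ is strictly positive, so the inequality holds trivially when $\I=0$. It therefore suffices to show that whenever $\I=1$ the exponent is nonnegative, i.e.
\[
  M\ln M \;\le\; (N+M)\ln|\X| .
\]
If $\X$ is empty, no configuration exists and $\I=0$, so we may assume $|\X|\ge 1$. Then $\ln|\X|\ge 0$.

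Assume $\I=1$ and fix a witnessing configuration $\{u_i\}_{i=1}^N$, $\{v_S\}_S$. The key step is an injectivity argument. For fixed document vectors $u_1,\dots,u_N$, a query $v$ determines the scores $u_i\cdot v$. The strict separation $\min_{i\in S}u_i\cdot v>\max_{i\notin S}u_i\cdot v$ forces $S$ to be exactly the set of the $k$ largest scores, so $S$ is a function of $v$. Hence $S\neq S'$ implies $v_S\neq v_{S'}$: if they were equal, the same query would have two distinct top-$k$ sets. The map $S\mapsto v_S$ is therefore an injection from the $M$ ground-truth sets into $\X$, which gives $M\le|\X|$.

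Taking logarithms, $M\ln M\le M\ln|\X|$, and adding $N\ln|\X|\ge 0$ gives $M\ln M\le (N+M)\ln|\X|$. Thus $\exp[-M\ln M+(N+M)\ln|\X|]\ge 1=\I$, which completes the argument.

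To match the paper's first-moment framing, the same fact can be phrased as a count. There are at most $|\X|^{N+M}$ tuples $(\{u_i\},\{v_S\})$. A valid tuple uses $M$ pairwise distinct query vectors, and each valid configuration can be reindexed by the $M!$ permutations among the $M$ labels $S$; I would then bound $M!$ below by roughly $e^{M\ln M - M}$. However, the injectivity route above gives the stated constant directly, with no Stirling correction, so I would present that.

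I expect no real obstacle. The only points needing care are the strictness of the separation, which guarantees that the top-$k$ set is uniquely determined by $v$, and the degenerate cases $|\X|\in\{0,1\}$.
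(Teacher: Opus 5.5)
Your proof is correct, but it takes a different route from the paper's. The paper bounds $\I$ by the \emph{number} of admissible configurations and writes that number as $|\X|^{N}\,\E_u\bigl[\prod_S a_S\bigr]$, with $a_S=\sum_{v\in\X}\1[\min_{i\in S}u_i\cdot v>\max_{i\notin S}u_i\cdot v]$. It applies AM--GM in the form $\prod_S a_S\le (M^{-1}\sum_S a_S)^M$. It then uses the same key fact you isolate (a fixed $v$ strictly separates at most one $S$) to get $\sum_S a_S\le|\X|$, so the count is at most $|\X|^{N+M}M^{-M}$. You instead use $\I\le 1$ and turn that key fact into a pigeonhole statement, $M\le|\X|$, after which nonnegativity of the exponent is immediate. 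Your route is shorter and actually proves more. It gives $\I=0$ whenever $\ln|\X|<\ln M$, a larger range than the range $\ln|\X|<\frac{M}{N+M}\ln M$ certified by the proposition. Used downstream, this would remove the factor $M/(N+M)$ wherever it enters: the $(1-2/N)$ in Theorem~\ref{thm:generic-bound} and the $\ln\frac{M}{N+M}$ term in $B^*$. It also shows that the $|\X|^{N}$ factor from the document vectors is pure slack. What the paper's route buys is a bound on the number of admissible configurations, not just on the indicator. That count is the object needed for first- and second-moment refinements, such as approximate retrieval or the lower bounds mentioned as future work, although for $\I$ itself the bound is loose. Your aside about $M!$ relabelings would only give a count bound of $|\X|^{N+M}/M!$, which is weaker than the paper's and carries the extra $+M$ you anticipated, so presenting the injectivity argument is the right choice.
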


Since $\I$ is a non-negative integer, $\I=0$ once the exponent is negative, that
is, once $M\ln M$ exceeds $(N+M)\ln|\X|$. Because
$M=\binom{N}{k}\ge (N/k)^{k}$ grows polynomially in $N$ of degree $k$ while
$\ln|\X|$ is a fixed budget, the budget must scale with $k\ln N$ for $\I=1$ to
remain possible. The next theorem makes the constant explicit for a
bit-quantized alphabet.

\begin{theorem}[Finite-alphabet impossibility]\label{thm:generic-bound}
Suppose $N \ge 2$ and $k \ge 2$. For any alphabet $\X$ with
$|\X| = 2^{Bd}$ and $B \ge 1$, if
\begin{equation}
\label{eq:thm1}
  d \;<\; \left(1 - \frac{2}{N}\right) \frac{k (\ln N - \ln k) }{B \ln 2}, 
\end{equation}
then $\I = 0$: no finite-precision configuration serves all top-$k$ queries.
\end{theorem}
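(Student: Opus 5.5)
We derive the theorem directly from Proposition~\ref{lem:first-moment}. With $|\X| = 2^{Bd}$ we have $\ln|\X| = Bd\ln 2$, so the exponent there equals $-M\ln M + (N+M)Bd\ln 2$. Because $\I \in \{0,1\}$ and the bound $\exp[\cdot]$ is strictly less than $1$ exactly when this exponent is negative, it suffices to show that \eqref{eq:thm1} forces
\[
  Bd\ln 2 \;<\; \frac{M}{N+M}\,\ln M .
\]
The plan is to bound the right-hand side from below by $\bigl(1-\tfrac{2}{N}\bigr)k(\ln N - \ln k)$, and then to note that \eqref{eq:thm1} states exactly that $Bd\ln 2$ lies below this quantity.

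The two factors are treated separately. For the logarithm, we use the standard estimate $M = \binom{N}{k} \ge (N/k)^k$, which follows from $\binom{N}{k} = \prod_{j=0}^{k-1}\frac{N-j}{k-j}$ together with $\frac{N-j}{k-j}\ge \frac{N}{k}$ for $j<k\le N$. This gives $\ln M \ge k(\ln N - \ln k)$, and the right side is nonnegative. For the ratio, we write $\frac{M}{N+M} = 1 - \frac{N}{N+M}$ and aim for $\frac{N}{N+M}\le \frac{2}{N}$, which is equivalent to $N^2 \le 2(N+M)$, i.e.\ $M \ge \tfrac{N^2}{2} - N$. When $2\le k\le N-2$ this holds because $M \ge \binom{N}{2} = \tfrac{N(N-1)}{2} \ge \tfrac{N^2}{2}-N$, using the unimodality of binomial coefficients. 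Multiplying the two bounds, both of which are nonnegative, gives the required lower bound.

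The main obstacle is the boundary range of $k$, where $\binom{N}{k}$ is small. If $k\in\{N-1,N\}$ (or $N\le 3$), the ratio estimate can fail. We handle these cases with the logarithmic factor. For $k=N$, the threshold in \eqref{eq:thm1} is $0$, so no $d$ satisfies the hypothesis and the statement is vacuous. For $k = N-1$, the threshold is $\bigl(1-\tfrac{2}{N}\bigr)(N-1)\ln\frac{N}{N-1}/(B\ln 2) < 1$, since $(N-1)\ln\frac{N}{N-1}\le 1$. Hence only $d=0$ is allowed. Then $|\X|=1$, all $u_i$ coincide, and strict separation is impossible as long as some $i\notin S$ exists, so $\I=0$ directly. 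If $N\ge 3$ is small with $k\le N-2$, the unimodality argument still applies. Once these edge cases are checked, the main case follows from the two estimates above.
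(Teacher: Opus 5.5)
Your main case $2\le k\le N-2$ is exactly the paper's argument. You plug $\ln|\X|=Bd\ln 2$ into Proposition~\ref{lem:first-moment}, use $M\ge\binom N2$ to get $\frac{M}{N+M}\ge 1-\frac2N$, and use $M\ge (N/k)^k$ for the logarithm. The paper asserts $M\ge\binom N2$ for every $k\ge2$, which is false for $k\in\{N-1,N\}$, so you are right that these boundary cases need separate treatment. Your $k=N$ argument (the threshold is $0$, so the hypothesis is vacuous) is fine.

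The $k=N-1$ case contains a false step. From $(N-1)\ln\frac{N}{N-1}\le 1$ you conclude that the threshold $\bigl(1-\frac2N\bigr)(N-1)\ln\frac{N}{N-1}/(B\ln 2)$ is below $1$. That silently uses $B\ln 2\ge 1$, which fails for $B=1$ since $\ln 2<1$. For $B=1$ the threshold increases to $1/\ln 2\approx 1.44$ as $N\to\infty$, and it already exceeds $1$ at $N=8$ (about $1.011$). So $d=1$ satisfies \eqref{eq:thm1}, and your argument, which only disposes of $d=0$, does not cover it. For $B\ge2$ the threshold is below $1/(2\ln 2)<1$, and your reasoning is correct there.

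The repair is short. In the missing case $|\X|=2<N$, so two documents share a vector, $u_i=u_j$. Any $S$ with $i\in S$ and $j\notin S$ exists since $k\le N-1$, and it gives $\min_{l\in S}u_l\cdot v_S\le u_i\cdot v_S=u_j\cdot v_S\le\max_{l\notin S}u_l\cdot v_S$, so $\I=0$. This is your $d=0$ pigeonhole argument with $|\X|=1$ replaced by $|\X|<N$. Alternatively, put $M=N$ and $d=B=1$ directly into Proposition~\ref{lem:first-moment}: the exponent is $N(2\ln 2-\ln N)$, which is negative for $N>4$.
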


Writing the stored budget as $Bd$ bits per vector, perfect all-subset retrieval
requires $Bd =\Omega \bigl(k\ln N\bigr)$.
The bound is symmetric in $B$ and $d$, so precision and dimension are in a trade-off relation.

\section{Uniformly Quantized Embeddings}\label{sec:fp}

While Theorem~\ref{thm:generic-bound} holds for any $2^{Bd}$-point alphabet, it is more informative to specialize to alphabets which reflect practical usage, such as the 4-bit and 8-bit scalar quantizers in FAISS~\citep{douze2024faiss,johnson2019faissgpu}.
 In particular, \texttt{QT\_$B$\_uniform} variants in FAISS are calibration-free, with a shared range and quantization width for each coordinate. 
We therefore introduce a model that quantizes each coordinate on a shared range $[-1,1]$ with a symmetric uniform mid-point grid of $2^B$ levels.
Furthermore, to reflect the standard practice of normalized embeddings, we restrict the vectors to lie within the unit $\ell_2$ ball. 
Since the ball strictly contains the unit sphere, any impossibility ($\I = 0$) proved under this model also holds when vectors are restricted to unit norm.


\begin{definition}\label{def:grid}
Let $\Lambda_B \subset [-1,1]$ be the $B$-bit uniform mid-point quantizer
alphabet, the set of $2^{B}$ points partitioning $[-1,1]$ into $2^{B}-1$ equal
subintervals of length $2^{1-B}$,
\[
  \Lambda_B \;=\; \bigl\{\,2^{-B}(2q+1)\;\big|\;
     q = -2^{B-1},\,\ldots,\,2^{B-1}-1\,\bigr\}.
\]
Define the grid $\mathcal{G}_{d,B} = \Lambda_B^{d} \subseteq [-1,1]^{d}$ and
the norm-bounded subset
\[
  \mathcal{S}_{d,B} \;=\; \bigl\{\,x \in \mathcal{G}_{d,B} \,\big|\, \|x\|_2 \le 1\,\bigr\}.
\]
\end{definition}

$\mathcal{S}_{d,B}$ is the set of vectors a normalized, $B$-bit-quantized system
can store. 
The next theorem provide the conditions in which perfect top-$k$ retrieval becomes impossible, which we prove in the Appendix by evaluating $|\mathcal{S}_{d,B}|$ and applying Proposition~\ref{thm:generic-bound}. 

\begin{theorem}[Finite-precision impossibility]\label{thm:fp-bound}
Suppose $N \ge 7$ and $k \ge 2$, and let
$\{u_1,\ldots,u_N\} \subseteq \mathcal{S}_{d,B}$. Define $B^*$ from 
\begin{equation} \label{eq:Bstar}
    B^* \ln 4 = \ln \ln M - \ln \pi + \ln \frac{M}{N+M}. 
\end{equation}
If any one of the following
holds, then $\I = 0$; i.e. perfect top-$k$ retrieval is impossible. 
\begin{enumerate}[label=\textup{(\roman*)}, leftmargin=2.4em,itemsep=2pt]
\item \emph{(Sub-critical precision)} $B \leq B^{*}$. 
\item \emph{(Precision-limited dimension)} for $B = (1+\delta)B^{*}$,
  $\delta > 0$,
  \begin{equation}
  \label{eq:thm2_2}
          d \;<\; \frac{2k}{e} \left(1-\frac{2}{N}\right) \frac{ \ln N - \ln k }{1+ \delta B^* \ln 4}. 
  \end{equation}
\item \emph{(High dimension)} for $B = (1+\delta)B^{*}$, $\delta \ge 0$,
  \begin{equation}\label{eq:thm2_3}
    d \;>\; 2e\,[k\ln N + k - k \ln k]^{1+\delta}.  
  \end{equation}
\end{enumerate}
\end{theorem}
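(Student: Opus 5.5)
The plan is to apply Proposition~\ref{lem:first-moment} with $\X = \mathcal{S}_{d,B}$. It then suffices to show
\[
\ln|\mathcal{S}_{d,B}| < \frac{M\ln M}{N+M} \;=:\; L
\]
in each of the three regimes. Note that $\ln L = \ln\ln M + \ln\frac{M}{N+M}$. Comparing with \eqref{eq:Bstar}, this gives $4^{B^*} = L/\pi$, i.e. $L = \pi\, 4^{B^*}$. So everything reduces to upper bounds on $\ln|\mathcal{S}_{d,B}|$ as a function of $(d,B)$, compared against $\pi 4^{B^*}$.

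The grid points have coordinates $x_j = 2^{-B}(2q_j+1)$, so $\|x\|_2\le 1$ means $\sum_j (2q_j+1)^2 \le 4^B$. Each odd integer satisfies $(2q_j+1)^2\ge 1$, so at most $4^B$ coordinates can be present, and none is zero. Hence $\mathcal{S}_{d,B}=\emptyset$ when $d>4^B$. I will use two estimates.

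\textbf{Volume estimate.} Translate each point to a cube of side $2^{1-B}$. The cubes lie in a ball of radius $1+\sqrt d\,2^{-B}$. This gives $|\mathcal{S}_{d,B}| \le \mathrm{vol}(B_d)(1+\sqrt d 2^{-B})^d 2^{(B-1)d}$. With Stirling, $\mathrm{vol}(B_d)\le (2\pi e/d)^{d/2}$, this becomes $\ln|\mathcal{S}| \lesssim \frac d2\ln\frac{\pi e\,4^{B}}{2d}+O(\cdot)$.

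\textbf{Per-coordinate estimate.} Alternatively, $\ln|\mathcal{S}|\le d\ln(\#\text{values}) = Bd\ln 2$, giving the trivial bound.

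The function $\frac d2 \ln\frac{c\,4^B}{d}$ is maximised at $d = c4^B/e$, with value $c4^B/(2e)$. With the right constant $c$, this gives $\sup_d \ln|\mathcal{S}_{d,B}| \le \pi 4^B$ (roughly).

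\textbf{Case (i).} When $B\le B^*$, $\ln|\mathcal{S}|\le \pi 4^{B}\le \pi 4^{B^*} = L$ uniformly in $d$. One must also check strictness, or that the exponent being $\le 0$ with $\I$ integer still works. Strictness can come from the slack in Stirling, or from $\mathcal{S}$ being empty for $d>4^B$.

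\textbf{Case (ii).} With $B=(1+\delta)B^*$, use the bound $\frac d2\ln\frac{\pi e 4^B}{d}$ and require it to be below $L$. Substituting $4^{B} = (L/\pi)^{1+\delta}$, the condition reads
\[
\tfrac d2\bigl[1+\ln(e\,d^{-1}) + (1+\delta)B^*\ln 4 + \ln\pi - \dots\bigr] < L .
\]
Then lower-bound $L$ via $M\ln M/(N+M)\ge (1-2/N)\,k(\ln N-\ln k)$, as in Theorem~\ref{thm:generic-bound}, to reach \eqref{eq:thm2_2}. The factor $2/e$ and the denominator $1+\delta B^*\ln 4$ should emerge from normalising the logarithmic term by the critical scale $4^{B^*}$.

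\textbf{Case (iii).} For large $d$, use the concentration view: $\frac d2\ln\frac{\pi e4^B}{d}$ becomes negative once $d>\pi e 4^B$. More precisely, it is below $L$ once $d \ge 2e\cdot 4^{B}/(\cdots)$. Then bound $4^B=(L/\pi)^{1+\delta}$, using $L\le \ln M\le k\ln N+k-k\ln k$ (from $\binom Nk\le (eN/k)^k$). This yields $d>2e[k\ln N+k-k\ln k]^{1+\delta}$.

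The main obstacle is getting a clean, rigorous upper bound on $|\mathcal{S}_{d,B}|$ whose maximum over $d$ is exactly $\pi 4^{B}$ with the stated constants. The naive cube-covering inflates the radius by $\sqrt d 2^{-B}$, which is not negligible when $d\sim 4^B$.

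I would first try a generating-function/Chernoff bound instead:
\[
|\mathcal{S}_{d,B}|\le e^{t4^B}\Bigl(\sum_{m \text{ odd}} e^{-tm^2}\Bigr)^d \le e^{t4^B}\Bigl(\sqrt{\pi/(4t)}+\dots\Bigr)^d .
\]
Optimising $t$ gives a bound of the form $\frac d2\ln\frac{\pi e\,4^B}{2d}$ without the radius inflation, and I would carry the three cases out from this bound.
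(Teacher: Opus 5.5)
Your plan has the same backbone as the paper's proof. You apply Proposition~\ref{lem:first-moment} with $\mathcal{X}=\mathcal{S}_{d,B}$, identify $M\ln M/(N+M)=L=\pi 4^{B^*}$, and bound $|\mathcal{S}_{d,B}|$ by an exponential-moment estimate with $t$ proportional to $d$ (Lemma~\ref{lem:cardinality}). The paper removes your ``$+\dots$'' by bounding the sum over odd integers by twice the sum over all positive integers, then comparing with the Gaussian integral. In your normalisation this replaces $\sqrt{\pi/(4t)}+\dots$ by $\sqrt{\pi/t}$ with no error term. Taking $t=d/(2\cdot 4^{B})$ then gives $\ln|\mathcal{S}_{d,B}|\le f(d):=\frac d2\ln\frac{2\pi e\,4^B}{d}$, whose maximum over $d$ is exactly $\pi 4^B$, attained at $d=2\pi 4^B$. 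Your sharper leading term is in fact a valid strict bound ($\sum_{m\ \mathrm{odd}}e^{-tm^2}<\sqrt{\pi/(4t)}$ by Poisson summation) and would improve $2\pi e$ to $\pi e/2$, but the stated constants do not need it. The cube-covering estimate can simply be dropped.

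The one step that is not yet an argument is (ii). Saying the constants ``should emerge'' does not solve $f(d)<L$, which is transcendental in $d$. The paper solves it exactly with Lambert $W$: the bound fails to certify $\I=0$ only for $2\pi 4^B e^{W_{-1}(-y)+1}\le d\le 2\pi 4^B e^{W_0(-y)+1}$, with $y=e^{-1}4^{B^*-B}$. It then lower-bounds $W_{-1}$ by Lemma~\ref{lem:W-neg}.

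A test-point argument in your spirit avoids $W$ entirely. Put $u=\delta B^*\ln 4$ and $d_0=2L/(e(1+u))$. Then $d_0<2\pi 4^B$, $f$ is increasing on $(0,2\pi 4^B)$, and
\[
f(d_0)=L\,\frac{2+u+\ln(1+u)}{e(1+u)}<L .
\]
The inequality holds because $e(1+u)-2-u-\ln(1+u)$ equals $e-2>0$ at $u=0$ and is increasing. Since $L\ge(1-2/N)k(\ln N-\ln k)$, the hypothesis of (ii) forces $d<d_0$, hence $f(d)<L$ and $\I=0$. This reproduces the paper's threshold exactly.

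Your observation that $\mathcal{S}_{d,B}=\emptyset$ for $d>4^B$ (every coordinate has modulus at least $2^{-B}$) is not used in the paper, and it buys two things. First, strictness in (i) is automatic: for $d\le 4^B$ one has $f(d)\le\frac{4^B}{2}\ln(2\pi e)<\pi 4^B\le L$. Second, (iii) becomes trivial: $4^B=(L/\pi)^{1+\delta}<[k\ln N+k-k\ln k]^{1+\delta}$ lies below the threshold in \eqref{eq:thm2_3}, so in regime (iii) there are no admissible vectors at all. The paper instead obtains (iii) from $W_0(-y)\le 0$. That is your ``concentration'' argument with the constant $2\pi e$: $f(d)<0$ once $d>2\pi e\,4^B$, and $2\pi e\,4^B\le 2e(\ln M)^{1+\delta}$.
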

The implication of these results will be addressed in the next section. 
Note that since $M \geq (N/k)^k$, $B^*$ is $O(\ln \ln N)$ for $k = O(1)$. 

\section{Relation to prior work}\label{sec:discussion}

\paragraph{Reconciliation with the $O(k)$ bounds.}
The $O(k)$ embedding results of Wang et al.~\cite{wang2026r2k} and the
sign-rank construction of Alon et al.~\cite{alon1985geometrical} invoked in
Weller et al.~\cite{weller2026limits} both assume real-valued coordinates of
unbounded precision. In the notation of this paper, this is the limit
$B \to \infty$. 
However, in this limit the inequalities \eqref{eq:thm1} and \eqref{eq:thm2_2} holds trivially for any $d \geq 1$. The bound
is therefore vacuous when precision is unconstrained, and it imposes no
restriction on the corpus-independent dimension $d = 2k+1$ obtained by
the previous constructions assuming infinite precision. The two results are thus not in conflict, but they describe the distinct behaviors under different assumptions on $B$. 

\paragraph{Practical implications.}
Theorem~\ref{thm:generic-bound} establishes that the binding quantity over
any finite alphabet is the product $Bd$, the bit budget per document
embedding. Since the bound is symmetric in $B$ and $d$, halving the
precision at fixed corpus size doubles the minimum admissible dimension,
and halving the dimension at fixed corpus size doubles the minimum
admissible precision. 
Theorem~\ref{thm:fp-bound} sharpens this picture for the $\ell_2$-normalized scalar quantizers. 
Apart from condition (i) which tightens the bound for the general case \eqref{eq:thm1}, its principal consequence is the hard precision floor $B^{*} = O(\ln \ln N)$ in (ii), below which no dimension suffices for
perfect top-$k$ retrieval. This implies that under an overwhelmingly large corpus, one must also provide a sufficiently fine quantization \textit{irrespective of how large the embedding dimension is.} This condition is tighter than the generic bound in Theorem~\ref{thm:generic-bound}, where the necessary condition can be satisfied by increasing $d$ to compensate a small $B$. 
Regime (iii) establishes a ceiling on dimension at fixed precision. As $d$ grows, the unit ball's volume shrinks relative to the volume of the cube $[-1,1]^d$, so the norm-bounded grid $\mathcal{S}_{d,B}$ admits too few vectors to realize all $\binom{N}{k}$ top-$k$ patterns even though the underlying grid spacing $2^{1-B}$ is unchanged. 

\section{Numerical Illustration}\label{sec:numerics}
Figure~\ref{fig:thm1} illustrates the necessary lower bound for $d$ against different $N$ for $B = 4$ under Theorem~\ref{thm:generic-bound}, where perfect top-$k$ retrieval fails for any $d$ below the corresponding curve. We can see that for moderately large $k$, such as $k = 128$, the necessary dimension exceeds values used in standard dense retrievers (such as $d = 384$ \citep{Wang2020MiniLM}) for large corpora with $N$ larger than $10^6$. 

The left panel of Figure~\ref{fig:thm2} illustrates the necessary lower and upper bounds under Theorem~\ref{thm:fp-bound} for $d$ under the same conditions. In particular, for $k = 128$, $B = 4$ cannot exceed $B^*$ for $N$ larger than $N \approx 2.5 \cdot 10^4$, indicating a stronger restriction on $d$ compared to Theorem 1. The same cutoff is also observed for $k = 64$, albeit at a larger $N \approx 7 \cdot 10^6$. However, as we see in the right panel of Figure~\ref{fig:thm2}, these cutoffs can be avoided fairly easily by slightly increasing $B$, as $B^*$ only grows as $O(\ln \ln N).$ 
Finally, the upper bound on $d$ given in condition (iii) of Theorem~\ref{thm:fp-bound} only have mild consequences, as standard dense retrievers rarely exceed $d = 4096$ \citep{Wang2024ImprovingText}, while the upper ceiling in the left panel of Figure~\ref{fig:thm2} exceeds 4300. 

\begin{figure}
    \centering
    \includegraphics[width=0.7\linewidth]{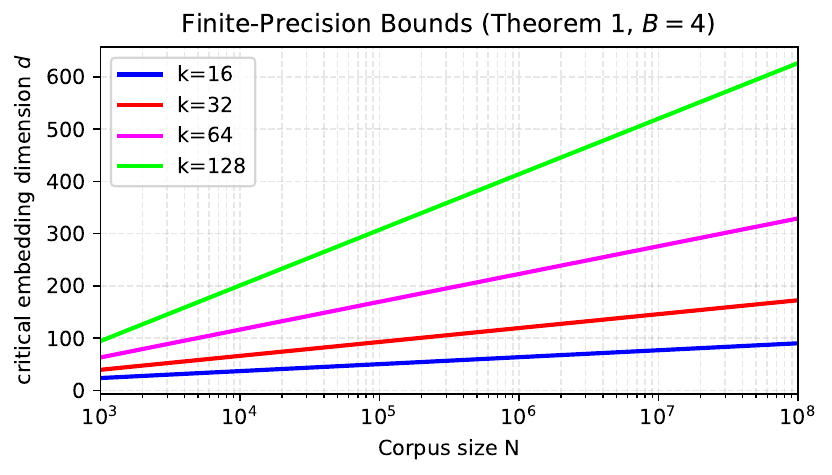}
    \caption{Critical embedding dimension (right-hand side of \eqref{eq:thm1}) against $N$ for $B = 4$ and different values of $k$.}
    \label{fig:thm1}
\end{figure}
\begin{figure}
    \centering
    \includegraphics[width=1.0\linewidth]{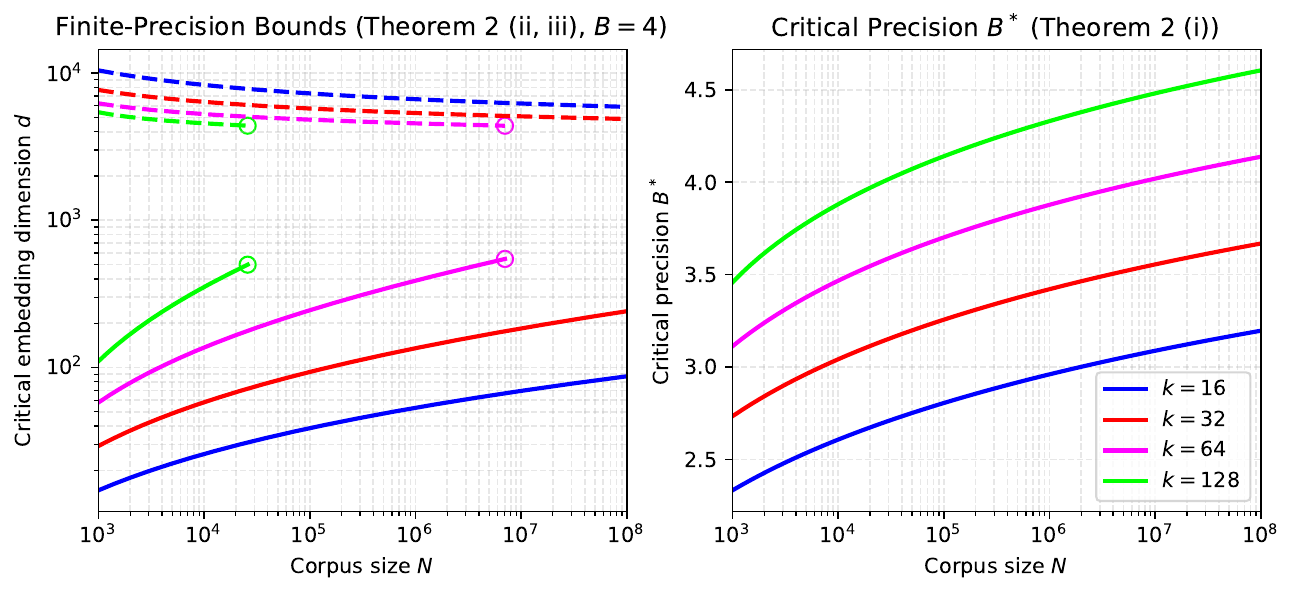}
    \caption{(Left:) Critical embedding dimensions given by the right-hand side of \eqref{eq:thm2_2} (solid line) and \eqref{eq:thm2_3} (broken line) against $N$ for $B = 4$. Curves terminate when $B^*$ exceeds $B$, in which any embedding dimension yields $\I = 0$.  (Right:) Critical precision $B^*$, defined in \eqref{eq:Bstar}, against $N$. }
    \label{fig:thm2}
\end{figure}

\section{Limitations and Future Work}

The analysis addresses perfect top-$k$ retrieval, requiring every
$k$-subset of $[N]$ to be realizable as the result of some query. An
approximate formulation, in which a controlled fraction of queries may
yield an incorrect set, would better reflect the metrics used to evaluate
deployed systems and would likely relax regime (i) of
Theorem~\ref{thm:fp-bound}. Establishing such bounds is left to
future work.

The quantization model considered in Definition \ref{def:grid} is uniform mid-point scalar quantization with a shared range $[-1,1]$. This does not fully reflect the properties of some basic quantization schemes, as deployed systems also employ learned, non-uniform codebooks and product quantization schemes that decompose the embedding space before discretization. For such schemes, Theorem~\ref{thm:generic-bound} continues to apply as the relevant
fallback, whereas Theorem~\ref{thm:fp-bound} does not extend directly.
Establishing analogous precision-specific bounds for learned codebooks
and product quantization is also an interesting direction of research.

Finally, the results are impossibility statements; we do not exhibit explicit finite-precision configurations attaining either
Theorem~\ref{thm:generic-bound} or Theorem~\ref{thm:fp-bound}, and the
precision floor $B^{*} = O(\log \log N)$ are not claimed tight. While not attempted in this work, this could possibly be addressed using lower-bounding techniques such as the second moment method \citep{mezard2009information}. 

\section{Conclusion}

This paper studied the existence of finite-precision dense embeddings
that realize every top-$k$ retrieval over an $N$-document corpus.
A first-moment counting argument over the discrete configuration space
established $Bd = \Omega(k \ln N)$ for any $2^{Bd}$-point alphabet, and
the same scheme applied to $\ell_2$-normalized uniform scalar quantization
yielded a precision floor of order $\ln \ln N$, together with lower and
upper bounds on the necessary embedding dimension.
These results separate the unbounded-precision and finite-precision
regimes and quantify how, in the latter, the admissible embedding
geometry depends on the corpus size.

\appendix
\section{Proofs}\label{app:proofs}

\subsection*{Auxiliary Lemmas}
In this subsection we state and prove two lemmas which are crucial to the main proof of Theorems \ref{thm:generic-bound} and \ref{thm:fp-bound}. 

\begin{lemma}[Cardinality of $\mathcal{S}_{d,B}$]\label{lem:cardinality}
\[
  |\mathcal{S}_{d,B}|
  \;\le\; \exp\left[\,d\left(B \ln 2 - \frac12\ln d
        + \frac12 \ln(2\pi e)\right)\right].
\]
\end{lemma}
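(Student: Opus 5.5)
The plan is a Chernoff-type (exponential-moment) count rather than a volume-packing argument. Packing cubes of side $h=2^{1-B}$ around grid points inflates the ball radius by $\sqrt d\,h/2$, which is too lossy in high dimension. For any $\lambda>0$, every $x\in\mathcal{S}_{d,B}$ satisfies $1-\|x\|_2^2\ge 0$, so $e^{\lambda(1-\|x\|_2^2)}\ge 1$ on $\mathcal{S}_{d,B}$. The remaining terms are non-negative, so
\[
|\mathcal{S}_{d,B}| \le \sum_{x\in\mathcal{G}_{d,B}} e^{\lambda(1-\|x\|_2^2)} = e^{\lambda}\Bigl(\sum_{a\in\Lambda_B} e^{-\lambda a^2}\Bigr)^{d},
\]
using that $\mathcal{G}_{d,B}=\Lambda_B^d$ is a product set and the summand factorizes over coordinates.

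The second step bounds the one-dimensional sum by a Gaussian integral. The points of $\Lambda_B$ are $a=(2q+1)h/2$ with spacing $h=2^{1-B}$. For each positive $a$, the function $e^{-\lambda t^2}$ is at least $e^{-\lambda a^2}$ on $[a-h,a]$, since there $|t|\le a$. Hence $h\,e^{-\lambda a^2}\le\int_{a-h}^{a}e^{-\lambda t^2}\,dt$. The intervals $[a-h,a]$ for $a=h/2,3h/2,\dots$ are disjoint and lie in $[-h/2,\infty)$, so
\[
\sum_{a>0} e^{-\lambda a^2}\le \frac1h\int_{\R}e^{-\lambda t^2}\,dt=\frac1h\sqrt{\pi/\lambda}.
\]
The same bound holds for $a<0$ by symmetry. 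Therefore
\[
\sum_{a\in\Lambda_B}e^{-\lambda a^2}\le \frac{2}{h}\sqrt{\pi/\lambda}=2^{B}\sqrt{\pi/\lambda}.
\]
This crude factor of $2$, which comes from double-counting the central interval, is affordable: it is exactly the slack the target bound allows.

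The final step is to optimize over $\lambda$. Substituting gives
\[
\ln|\mathcal{S}_{d,B}|\le \lambda + d\bigl(B\ln 2+\tfrac12\ln\pi-\tfrac12\ln\lambda\bigr).
\]
Minimizing in $\lambda$ gives $\lambda=d/2$, so
\[
\ln|\mathcal{S}_{d,B}|\le d\bigl(B\ln2-\tfrac12\ln d+\tfrac12\ln(2\pi)+\tfrac12\bigr)=d\bigl(B\ln 2-\tfrac12\ln d+\tfrac12\ln(2\pi e)\bigr),
\]
which is the claimed bound.

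The only delicate point is the sum-to-integral comparison in the second step. One has to check that the chosen intervals are genuinely disjoint and that the monotonicity of $e^{-\lambda t^2}$ in $|t|$ is used in the correct direction, including the central interval $[-h/2,h/2]$, which straddles zero. Everything else is routine algebra.
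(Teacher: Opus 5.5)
Your proof is correct and is essentially the paper's: it uses the same exponential-moment bound $\1[\|x\|_2^2\le 1]\le e^{\lambda(1-\|x\|_2^2)}$, the factorization over $\Lambda_B^d$, the one-dimensional Gaussian bound $\sum_{a\in\Lambda_B}e^{-\lambda a^2}\le 2^B\sqrt{\pi/\lambda}$, and the choice $\lambda=d/2$. The only difference is cosmetic: the paper gets the factor $2$ by extending the sum over odd $q$ to all positive integers and comparing with $\int_0^\infty$, whereas in your argument it comes from bounding each half-sum by the full-line integral $\frac1h\int_{\R}e^{-\lambda t^2}\,dt$. It does not come, as you say, from double-counting the central interval $[-h/2,h/2]$, which costs only an additive $1$.
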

\begin{proof}
    For any $t>0$, $\1[\|x\|_2^2\le 1]\le e^{t(1-\|x\|_2^2)}$. Summing over
$\mathcal{G}_{d,B}$ and factorizing,
\begin{equation}\label{eq:tmp1}
  |\mathcal{S}_{d,B}|
  \le e^{t}\Bigl(\sum_{x\in\Lambda_B} e^{-t x^2}\Bigr)^{d}.    
\end{equation}
By Definition~\ref{def:grid}, with $q$ ranging over odd integers,
\begin{align*}
  \sum_{x\in\Lambda_B} e^{-t x^2}
   = \sum_{q=-2^{B}+1,\ q:\text{odd}}^{2^{B}-1}
      e^{-\frac{t}{4^{B}} q^2}
  \le 2\sum_{q=1}^{\infty} e^{-\frac{t}{4^{B}} q^2}  <  2\int_{0}^{\infty} e^{-\frac{t}{4^{B}}x^2}\,dx 
  = 2^{B}\sqrt{\pi/t},    
\end{align*}
where the strict inequality holds since the summand is decreasing. Therefore, from \eqref{eq:tmp1}, 
$
  |\mathcal{S}_{d,B}|
  \le \exp\left(t + dB\ln 2 - \frac{d}{2}\ln t + \frac{d}{2}\ln\pi\right),
$
and substituting the optimizing choice $t=d/2$ yields the claim.
\end{proof}

\begin{lemma}[bound on the Lambert-$W$ function]\label{lem:W-neg}
Let $W_{-1}$ be the lower branch of the Lambert $W$ function. For any $u>0$,
\[
  W_{-1}\left(-e^{-u-1}\right)
  > -(u+2) - \ln(u+1).
\]
\end{lemma}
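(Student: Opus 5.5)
Let $w = W_{-1}(-e^{-u-1})$. By definition, $w \le -1$ and $w e^{w} = -e^{-u-1}$. Since $u>0$, the argument lies in $(-1/e,0)$, so in fact $w<-1$. Put $s=-w>1$. Taking logarithms of $s e^{-s} = e^{-u-1}$ turns the defining relation into
\[
  s - \ln s = u+1 .
\]
The claim $w > -(u+2)-\ln(u+1)$ is then equivalent to $s < u+2+\ln(u+1)$. The plan is to prove this with a monotonicity argument on $g(s) = s - \ln s$. On $(1,\infty)$ we have $g'(s) = 1 - 1/s > 0$, so $g$ is strictly increasing there. This branch $s>1$ is exactly the one selected by $W_{-1}$.

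Set $s_0 = u+2+\ln(u+1)$. Clearly $s_0 > 1$, so $s_0$ lies in the region where $g$ is increasing. It therefore suffices to show $g(s_0) > u+1 = g(s)$, which gives $s<s_0$ by monotonicity. Computing,
\[
  g(s_0) = u+2+\ln(u+1) - \ln\bigl(u+2+\ln(u+1)\bigr),
\]
so the required inequality $g(s_0) > u+1$ reduces to
\[
  1 + \ln(u+1) > \ln\bigl(u+2+\ln(u+1)\bigr).
\]
Exponentiating, this is equivalent to
\[
  e(u+1) > u+2+\ln(u+1).
\]
Substitute $v = u+1 > 1$. The inequality becomes $h(v) := ev - v - 1 - \ln v > 0$.

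To check this, note that $h(1) = e-2 > 0$. Also $h'(v) = e - 1 - 1/v \ge e-2 > 0$ for $v \ge 1$. Hence $h(v) > 0$ for all $v\ge 1$, and this closes the argument.

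The only delicate point is the branch bookkeeping. We must confirm that $W_{-1}$ corresponds to the solution $s>1$ of $s-\ln s = u+1$, rather than the $s<1$ solution, which belongs to the principal branch. We must also confirm that $s_0$ lies in the same monotone region, so that comparing $g$-values legitimately compares $s$ and $s_0$. Both facts follow from $s>1$ and $s_0>1$. The remaining steps are elementary calculus.
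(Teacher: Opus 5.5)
Your proof is correct, but it takes a different route from the paper's. You treat $s_0=u+2+\ln(u+1)$ as a candidate and verify it after the fact. Because $g(s)=s-\ln s$ is strictly increasing on $(1,\infty)$ and both $s$ and $s_0$ lie there, it suffices to check $g(s_0)>u+1$. That collapses to the single inequality $ev>v+1+\ln v$ for $v\ge 1$.

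The paper instead derives the bound by bootstrapping. From $t-\ln t=u+1$ and $t>1$ it first gets the lower bound $t>u+1+\ln(u+1)$. It then writes $t=u+1+\ln(u+1)+\rho$ with $\rho>0$. Applying $\ln(1+x)<x$ to $\rho=\ln t-\ln(u+1)$ gives $\rho<\ln(u+1)/u$, i.e.\ $t<u+1+\bigl(1+\tfrac1u\bigr)\ln(u+1)$. This is finally relaxed using $\ln(1+u)\le u$.

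The paper's argument never uses monotonicity of $g$. As a by-product it gives the sharper two-sided estimate $u+1+\ln(u+1)<t<u+1+(1+1/u)\ln(u+1)$. This shows that the extra $+1$ in the lemma is slack, needed only to control the $\ln(u+1)/u$ term for small $u$.

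Your route is shorter, makes the branch bookkeeping explicit, and works as a reusable template: any candidate upper bound can be tested by evaluating $g$ at it. However, it requires knowing the target in advance. It also does not by itself quantify how loose the stated bound is, though the ample slack in $ev>v+1+\ln v$ hints at it.
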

\begin{proof}
Let $t=-W_{-1}(-e^{-u-1})>1$, so $t-\ln t=u+1$, hence $t>u+1$ and
$t=\ln t+u+1>u+1+\ln(u+1)$. Put $\rho=t-(u+1)-\ln(u+1)>0$. Then
\[
  \rho=\ln t-\ln(u+1)
  =\ln\Bigl(1+\frac{\rho+\ln(u+1)}{u+1}\Bigr)
  <\frac{\rho+\ln(u+1)}{u+1},
\]
so $\rho<\frac1u\ln(u+1)$, giving
$t<u+1+\frac{u+1}{u}\ln(u+1)$. Finally using $u^{-1}\ln(u+1) \leq 1$ offers $t<u+2+\ln(u+1),$ resulting in the statement. 
\end{proof}

\subsection*{Proof of Proposition~\ref{lem:first-moment}}
We bound $\I$ by the count of admissible configurations and apply the first
moment method:
\begin{align*}
  \I
  &\le \#\left[
      \{u_i \in \X\}_{i=1}^{N},\,\{v_S \in \X\}_S
      \;\Big|\;
      \forall S,\; \min_{i \in S} u_i\cdot v_S > \max_{i \notin S} u_i\cdot v_S
    \right] \\
  &= |\X|^{N}\,\E_u\Biggl[\,
      \prod_S \sum_{v_S \in \X} \,
      \1\Bigl[\min_{i \in S} u_i\cdot v > \max_{i \notin S} u_i\cdot v\Bigr]
    \Biggr] ,
\end{align*}
where $\E_u$ is the average over all configurations of $u\in \X$. 
Using the AM-GM inequality on the product over $S$, i.e. $\prod_S a_S \leq (M^{-1} \sum_S a_S)^M$, offers
\begin{equation*}\I \le |\X|^{N+M}\,\E_u \Biggl[
      M^{-1} \E_{v} \sum_S 
      \1\Bigl[\min_{i \in S} u_i\cdot v > \max_{i \notin S} u_i\cdot v\Bigr] \Biggr]^{M}.
\end{equation*}
For any fixed configuration of $\{u_i\}$ and $v$, at
most one $S$ can satisfy the strict separation, so $\sum_S \1[\cdots] \leq 1$, yielding
$
  \I \le |\X|^{N+M}\,\E_u\left[M^{-1}\right]^{M}
  = \exp\bigl[\,-M \ln M + (N+M)\ln|\X|\,\bigr]. \qedhere
$

\subsection*{Proof of Theorem~\ref{thm:generic-bound}}
With $|\X|=2^{Bd}$, Lemma~\ref{lem:first-moment} offers
\begin{align*}
  \I
  &\le \exp\left[(N+M)\,B \ln 2\,
      \Bigl(d - \frac{M}{N+M}\,\frac{\ln M}{B \ln 2}\Bigr)\right] .
\end{align*}
Since $M\ge\binom{N}{2}=\frac12 N(N-1), 1-\frac{N}{N+M} \geq 1 - \frac{2}{N+1} > 1 - \frac{2}{N}$ for $N \geq 2$. 
Additionally using $M \ge (N/k)^{k}$ offers 
\[
  \I \le \exp\left[(N+M)\,B\ln 2\,
      \Bigl(d - \frac{k(\ln N - \ln k)}{B\ln 2}\Bigl(1 - \frac{2}{N}\Bigr) \Bigr)\right].
\]
$\I$ is a non-negative integer, so it is $0$ once the exponent is negative,
which is the stated condition. \qed

\subsection*{Proof of Theorem~\ref{thm:fp-bound}}
Combining Proposition~\ref{lem:first-moment} with Lemma~\ref{lem:cardinality},
\[
  \frac{1}{N+M}\ln\I
  \le \pi e\,4^{B}\left(
      -\frac{M}{N+M}\cdot\frac{\ln M}{\pi e\,4^{B}}
      - \frac{d}{2\pi e\,4^{B}}\ln\frac{d}{2\pi e\,4^{B}}\right).
\]
The minimum over $d$ of $\frac{d}{2\pi e\,4^{B}}\ln\frac{d}{2\pi e\,4^{B}}$ is
$-1/e$, so the right side is uniformly negative in $d$, forcing $\I=0$ whenever the following holds:
\[
  \frac{M}{N+M}\cdot\frac{\ln M}{\pi e\,4^{B}} \ge \frac1e,
  \ \text{i.e.}\
  B \le \frac{1}{\ln 4}\biggl[\ln\ln M-\ln\pi+\ln\frac{M}{N+M}\biggr]
  \equiv B^{*},
\]
 proving (i). Note that $N>7$ and $k>2$ guarantees $B^\star > 0$. For $B>B^{*}$ set
\[
  w=\ln \frac{d}{2\pi e\,4^{B}} \quad \text{and} \quad  
  y=e^{-1} 4^{B^{*}-B},
\]
so \[\frac{1}{\pi e\,4^{B}}\cdot\frac{1}{N+M}\ln\I\le -y-w e^{w}.\] This is
positive (the bound fails to certify $\I=0$) exactly when
$2\pi 4^{B}e^{W_{-1}(-y)+1}\le d\le 2\pi 4^{B}e^{W_{0}(-y)+1}$. Outside this
interval $\I=0$. With $u=\ln4^{B-B^{*}}=\delta B^{*}\ln4$,
Lemma~\ref{lem:W-neg} and the lower bound on $B^*$ via $\frac{M}{N+M} \geq 1-\frac{2}{N}$ give 
\[
  2\pi 4^{B}e^{W_{-1}(-y)+1}
  > \frac{2\pi}{e}\cdot\frac{4^{B^{*}}}{u+1}
  \ge \frac{2k}{e} \left(1-\frac{2}{N}\right) \frac{ \ln N - \ln k }{1+ \delta B^* \ln 4}
       ,
\] which is regime (ii). For the upper threshold,
using $W_{0}(-y)\le 0$ for $y\ge 0$, and with
$\ln M\le k (\ln N + k - \ln k )$,
\[
  2\pi 4^{B}e^{W_{0}(-y)+1}\le 2\pi e\,4^{(1+\delta)B^*}
  \le 2e\,[k\ln N + k - k \ln k]^{1+\delta}
\]
for $\delta\ge0$, which is regime (iii). \qed

\bibliography{ref}

@inproceedings{weller2026limits,
  author    = {Orion Weller and Michael Boratko and Iftekhar Naim and Jinhyuk Lee},
  title     = {On the Theoretical Limitations of Embedding-Based Retrieval},
  booktitle = {International Conference on Learning Representations (ICLR)},
  year      = {2026},
}

@article{wang2026r2k,
  author  = {Zihao Wang and Hang Yin and Lihui Liu and Hanghang Tong and Yangqiu Song and Ginny Wong and Simon See},
  title   = {$\mathbb{R}^{2k}$ is Theoretically Large Enough for Embedding-based Top-$k$ Retrieval},
  journal = {arXiv preprint arXiv:2601.20844},
  year    = {2026},
}

@inproceedings{alon1985geometrical,
  author    = {Noga Alon and Peter Frankl and Vojt{\v{e}}ch R{\"o}dl},
  title     = {Geometrical realization of set systems and probabilistic communication complexity},
  booktitle = {26th Annual Symposium on Foundations of Computer Science (FOCS)},
  pages     = {277--280},
  year      = {1985},
  publisher = {IEEE Computer Society}
}

@book{mohri2018foundations,
  author    = {Mehryar Mohri and Afshin Rostamizadeh and Ameet Talwalkar},
  title     = {Foundations of Machine Learning},
  edition   = {2nd},
  publisher = {MIT Press},
  year      = {2018}
}

@inproceedings{reimers2021curse,
  author    = {Nils Reimers and Iryna Gurevych},
  title     = {The Curse of Dense Low-Dimensional Information Retrieval for Large Index Sizes},
  booktitle = {Proceedings of the 59th Annual Meeting of the Association for Computational Linguistics and the 11th International Joint Conference on Natural Language Processing (Volume 2: Short Papers)},
  pages     = {605--611},
  year      = {2021},
  publisher = {Association for Computational Linguistics}
}

@inproceedings{yin2018dimensionality,
  author    = {Zi Yin and Yuanyuan Shen},
  title     = {On the Dimensionality of Word Embedding},
  booktitle = {Advances in Neural Information Processing Systems (NeurIPS)},
   publisher = {Curran Associates, Inc.},
  volume    = {31},
  year      = {2018}
}

@article{jegou2011pq,
  author  = {Herv{\'e} J{\'e}gou and Matthijs Douze and Cordelia Schmid},
  title   = {Product Quantization for Nearest Neighbor Search},
  journal = {IEEE Transactions on Pattern Analysis and Machine Intelligence},
  volume  = {33},
  number  = {1},
  pages   = {117--128},
  year    = {2011},
}

@article{andoni2008nearoptimal,
  author  = {Alexandr Andoni and Piotr Indyk},
  title   = {Near-Optimal Hashing Algorithms for Approximate Nearest Neighbor in High Dimensions},
  journal = {Communications of the ACM},
  volume  = {51},
  number  = {1},
  pages   = {117--122},
  year    = {2008},
}

@article{douze2024faiss,
  author  = {Matthijs Douze and Alexandr Guzhva and Chengqi Deng and Jeff Johnson and Gergely Szilvasy and Pierre-Emmanuel Mazar{\'e} and Maria Lomeli and Lucas Hosseini and Herv{\'e} J{\'e}gou},
  title   = {The {Faiss} library},
  journal = {arXiv preprint arXiv:2401.08281},
  year    = {2024}
}

@article{johnson2019faissgpu,
  author  = {Jeff Johnson and Matthijs Douze and Herv{\'e} J{\'e}gou},
  title   = {Billion-Scale Similarity Search with {GPUs}},
  journal = {IEEE Transactions on Big Data},
  volume  = {7},
  number  = {3},
  pages   = {535--547},
  year    = {2021},
}

@inproceedings{jacob2018quantization,
  author    = {Benoit Jacob and Skirmantas Kligys and Bo Chen and Menglong Zhu and Matthew Tang and Andrew Howard and Hartwig Adam and Dmitry Kalenichenko},
  title     = {Quantization and Training of Neural Networks for Efficient Integer-Arithmetic-Only Inference},
  booktitle = {Proceedings of the IEEE Conference on Computer Vision and Pattern Recognition (CVPR)},
  pages     = {2704--2713},
  year      = {2018}
}

@inproceedings{dettmers2022llmint8,
  author    = {Tim Dettmers and Mike Lewis and Younes Belkada and Luke Zettlemoyer},
  title     = {{LLM.int8()}: 8-bit Matrix Multiplication for Transformers at Scale},
  booktitle = {Advances in Neural Information Processing Systems (NeurIPS)},
   publisher = {Curran Associates, Inc.},
  volume    = {35},
  year      = {2022},
}

@book{mezard2009information,
  author    = {Marc M{\'e}zard and Andrea Montanari},
  title     = {Information, Physics, and Computation},
  series    = {Oxford Graduate Texts},
  publisher = {Oxford University Press},
  year      = {2009}
}

@book{alon2016probabilistic,
  author    = {Noga Alon and Joel H. Spencer},
  title     = {The Probabilistic Method},
  edition   = {4th},
  publisher = {Wiley},
  year      = {2016}
}

@article{erdos1947graph,
  author  = {Paul Erd{\H o}s},
  title   = {Some remarks on the theory of graphs},
  journal = {Bulletin of the American Mathematical Society},
  volume  = {53},
  number  = {4},
  pages   = {292--294},
  year    = {1947}
}

@article{achlioptas2004threshold,
  author  = {Dimitris Achlioptas and Yuval Peres},
  title   = {The threshold for random $k$-{SAT} is $2^{k}\log 2 - O(k)$},
  journal = {Journal of the American Mathematical Society},
  volume  = {17},
  number  = {4},
  pages   = {947--973},
  year    = {2004}
}

@article{achlioptas2005chromatic,
  author  = {Dimitris Achlioptas and Assaf Naor},
  title   = {The two possible values of the chromatic number of a random graph},
  journal = {Annals of Mathematics},
  volume  = {162},
  number  = {3},
  pages   = {1335--1351},
  year    = {2005}
}

@article{friedgut1999sharp,
  author  = {Ehud Friedgut},
  title   = {Sharp thresholds of graph properties, and the $k$-{SAT} problem},
  journal = {Journal of the American Mathematical Society},
  volume  = {12},
  number  = {4},
  pages   = {1017--1054},
  year    = {1999},
}

@article{ding2022satisfiability,
  author  = {Jian Ding and Allan Sly and Nike Sun},
  title   = {Proof of the satisfiability conjecture for large $k$},
  journal = {Annals of Mathematics},
  volume  = {196},
  number  = {1},
  pages   = {1--388},
  year    = {2022},
}

@article{barg2002random,
  author  = {Alexander Barg and G. David Forney},
  title   = {Random codes: minimum distances and error exponents},
  journal = {IEEE Transactions on Information Theory},
  volume  = {48},
  number  = {9},
  pages   = {2568--2573},
  year    = {2002},
}

@article{micikevicius2022fp8,
  author  = {Paulius Micikevicius and Dusan Stosic and Neil Burgess and Marius Cornea and Pradeep Dubey and Richard Grisenthwaite and Sangwon Ha and Alexander Heinecke and Patrick Judd and John Kamalu and Naveen Mellempudi and Stuart Oberman and Mohammad Shoeybi and Michael Siu and Hao Wu},
  title   = {{FP8} Formats for Deep Learning},
  journal = {arXiv preprint arXiv:2209.05433},
  year    = {2022},
}

@inproceedings{yamada2021binary,
    title = "Efficient Passage Retrieval with Hashing for Open-domain Question Answering",
    author = "Yamada, Ikuya  and
      Asai, Akari  and
      Hajishirzi, Hannaneh",
    booktitle = "Proceedings of the 59th Annual Meeting of the Association for Computational Linguistics and the 11th International Joint Conference on Natural Language Processing (Volume 2: Short Papers)",
    year = "2021",
    publisher = "Association for Computational Linguistics",
    pages = "979--986",
}

@inproceedings{su2023instructor,
    title = "One Embedder, Any Task: Instruction-Finetuned Text Embeddings",
    author = "Su, Hongjin  and
      Shi, Weijia  and
      Kasai, Jungo  and
      Wang, Yizhong  and
      Hu, Yushi  and
      Ostendorf, Mari  and
      Yih, Wen-tau  and
      Smith, Noah A.  and
      Zettlemoyer, Luke  and
      Yu, Tao",
    booktitle = "Findings of the Association for Computational Linguistics: ACL 2023",
    month = jul,
    year = "2023",
    publisher = "Association for Computational Linguistics",
    pages = "1102--1121"
}

@inproceedings{shao2025reasonir,
  title     = {ReasonIR: Training Retrievers for Reasoning Tasks},
  author    = {Shao, Rulin and Qiao, Rui and Kishore, Varsha and Muennighoff, Niklas and Lin, Xi Victoria and Rus, Daniela and Low, Bryan Kian Hsiang and Min, Sewon and Yih, Wen-tau and Koh, Pang Wei and Zettlemoyer, Luke},
  booktitle = {Conference on Language Modeling (COLM)},
  year      = {2025}
}

@inproceedings{Wang2020MiniLM,
 author = {Wang, Wenhui and Wei, Furu and Dong, Li and Bao, Hangbo and Yang, Nan and Zhou, Ming},
 booktitle = {Advances in Neural Information Processing Systems},
 pages = {5776--5788},
 publisher = {Curran Associates, Inc.},
 title = {MiniLM: Deep Self-Attention Distillation for Task-Agnostic Compression of Pre-Trained Transformers},
 volume = {33},
 year = {2020}
}

@inproceedings{Wang2024ImprovingText,
    title = "Improving Text Embeddings with Large Language Models",
    author = "Wang, Liang  and
      Yang, Nan  and
      Huang, Xiaolong  and
      Yang, Linjun  and
      Majumder, Rangan  and
      Wei, Furu",
    booktitle = "Proceedings of the 62nd Annual Meeting of the Association for Computational Linguistics (Volume 1: Long Papers)",
    month = aug,
    year = "2024",
    publisher = "Association for Computational Linguistics"
}

@ARTICLE{Malkov2020ANN,
  author={Malkov, Yu A. and Yashunin, D. A.},
  journal={IEEE Transactions on Pattern Analysis and Machine Intelligence}, 
  title={Efficient and Robust Approximate Nearest Neighbor Search Using Hierarchical Navigable Small World Graphs}, 
  year={2020},
  volume={42},
  number={4},
  pages={824-836},
}

@article{Salakhutdinov2009SemanticHashing,
title = {Semantic hashing},
journal = {International Journal of Approximate Reasoning},
volume = {50},
number = {7},
pages = {969-978},
year = {2009},
note = {Special Section on Graphical Models and Information Retrieval},
issn = {0888-613X},
author = {Ruslan Salakhutdinov and Geoffrey Hinton}
}

@article{Forster2002Comms,
title = {A linear lower bound on the unbounded error probabilistic communication complexity},
journal = {Journal of Computer and System Sciences},
volume = {65},
number = {4},
pages = {612-625},
year = {2002},
note = {Special Issue on Complexity 2001},
author = {Jürgen Forster}
}

@inproceedings{Frantar2023ICLR,
  author    = {Elias Frantar and Saleh Ashkboos and Torsten Hoefler and Dan Alistarh},
  title     = {GPTQ: Accurate Post-Training Quantization for Generative Pre-trained Transformers},
  booktitle = {International Conference on Learning Representations (ICLR)},
  year      = {2023},
}

@inproceedings{liu2023FP4,
    title = "{LLM}-{FP}4: 4-Bit Floating-Point Quantized Transformers",
    author = "Liu, Shih-yang  and
      Liu, Zechun  and
      Huang, Xijie  and
      Dong, Pingcheng  and
      Cheng, Kwang-Ting",
    booktitle = "Proceedings of the 2023 Conference on Empirical Methods in Natural Language Processing",
    month = dec,
    year = "2023",
    publisher = "Association for Computational Linguistics",
    pages = "592--605"
}

\end{document}